\documentclass{article}


\usepackage[preprint,nonatbib]{neurips_2024}




\usepackage[utf8]{inputenc} 
\usepackage[T1]{fontenc}    
\usepackage{hyperref}       
\usepackage{url}            
\usepackage{booktabs}       
\usepackage{amsfonts}       
\usepackage{nicefrac}       
\usepackage{microtype}      
\usepackage{xcolor}         

\usepackage{amsmath}
\usepackage{amssymb}
\usepackage{amsthm}
\usepackage{graphicx}

\usepackage{algorithm}
\usepackage{stmaryrd}  
\usepackage{wrapfig}
\usepackage{multirow}  

\usepackage[noend]{algpseudocode}

\algblockdefx[DoParallel]{DoParallel}{EndDoParallel}[0]{\textbf{Do in parallel}}{}
\makeatletter
\ifthenelse{\equal{\ALG@noend}{t}}%
  {\algtext*{EndDoParallel}}
  {}%
\makeatother

\algrenewcommand\algorithmicrequire{\textbf{Input:}}
\algrenewcommand\algorithmicensure{\textbf{Output:}}
\algnewcommand\Input{\item[\algorithmicrequire]}%
\algnewcommand\Output{\item[\algorithmicensure]}%
\newcommand{\LineComment}[1]{\hfill // \textit{#1}}

\usepackage{cleveref}

\theoremstyle{definition}

\theoremstyle{plain}
\newtheorem{theorem}{Theorem}[section]

\title{PermLLM: Private Inference of Large Language Models within 3 Seconds under WAN}
\author{%
  Fei Zheng, Chaochao Chen, Zhongxuan Han, Xiaolin Zheng\\
  College of Computer Science and Technology, Zhejiang University\\
  \texttt{ \{zfscgy2,zjucc,zxhan,xlzheng\}@zju.edu.cn} \\
}

\setlength{\textfloatsep}{8pt plus 1.0pt minus 2.0pt} 

\begin{document}
\maketitle
\begin{abstract}
    The emergence of ChatGPT marks the arrival of the large language model (LLM) era.
    While LLMs demonstrate their power in a variety of fields, they also raise serious privacy concerns as the users' queries are sent to the model provider.
    On the other side, deploying the LLM on the user's device will also leak all the model data.
    Existing methods based on secure multiparty computation (MPC) managed to protect both the privacy of the model parameters and user queries.
    However, they require gigabytes of data transfer and several minutes to generate just one token, making them impractical for most real-world applications.
    To improve the efficiency of private LLM inference, we propose \textit{PermLLM}, which accelerates the evaluation of non-linear functions using secure random permutation.
    Along with the optimized secret sharing protocols and homomorphic encryption, PermLLM achieves two-party private inference of the ChatGLM-6B model at the speed of around 3s/token, under a realistic network setting (10ms RTT and 1Gbps bandwidth), which is magnitudes faster than existing MPC solutions.
    
\end{abstract}

\section{Introduction}
With the advent of ChatGPT~\cite{chatgpt}, \textit{large language models} (LLMs)~\cite{touvron2023llama,duzhengxiao2022glm,wu2023bloomberggpt} have drawn much attention from both the public and academia.
LLMs have shown great ability in various tasks, e.g., question answering, reading comprehension, text summarization, mathematical reasoning, and so on~\cite{zhao2023llm_survey}.
However, in real-world applications, LLMs still face significant privacy concerns.
For example, consider a typical scenario of ChatGPT's usage where the user sends his query to the OpenAI server, and then gets the response.
In this case, the user's query is exposed to the LLM provider.
This is unacceptable when the query contains sensitive or valuable information, e.g., the user's personal information or confidential data of the company.
Another scenario is that the user downloads the complete LLM model and deploys it on his own device.
Although user privacy is protected in this case, the model's parameters are completely revealed to the user, which violates the model provider's privacy considering LLMs are valuable assets.

Existing approaches to protect both model and data's privacy include \textit{multiparty computation} (MPC) based methods and \textit{split learning}.
\textbf{Multiparty computation}~\cite{secureml2017,mohassel2018aby3} relies on cryptographic primitives.
Although it is theoretically capable of computing any functions, its usage is usually limited to simple machine learning models due to the heavy computation and communication overhead.
To date, current implementations of MPC-based LLMs take at least several minutes and gigabytes to generate one token in ideal settings, i.e., high-performance servers with large bandwidth~\cite{houxiaoyang2023ciphergpt,dongye2023puma,li2022mpcformer}.
The alternative method \textbf{split learning}~\cite{erdogan2022unsplit,vepakomma2018split_health} is also impractical for LLMs since the hidden representation of LLMs could fully reveal the input query~\cite{morris2023embedding_almost,zheng2023attack_vfllm}.

To achieve efficient private inference of LLMs, we propose \textit{PermLLM}, which combines cryptographic technologies with random permutation to realize efficient secure inference of LLMs.
PermLLM adopts a two-party setting with a semi-honest third party like many previous privacy-preserving machine learning studies~\cite{wagh2019securenn,mohassel2018aby3,riazi2018chameleon,knott2021crypten}, 
where one party ($P_0$) is the model provider and the other party ($P_1$) is the user, 
while a semi-honest third party ($P_2$) participates in the \textit{preparation} and \textit{offline} phase to generate pre-computed Beaver's triples and permutation triples.
The crucial idea of PermLLM is to outsource the expensive nonlinear computations to the user, but in a randomly permuted state so that the original data is not revealed.
Although random permutation cannot achieve information-theoretic security, it is secure in the practical sense since the hidden representations in LLM contain thousands of elements, yielding an almost infinite number of possible permutations.
As for linear computations such as matrix multiplication, we adopt the popular additive sharing scheme with the Beaver's triples~\cite{aby2015}, with some improvements regarding the properties of LLM inference, to optimize the performance of online computation.
Finally, the permuted scores are sent to $P_1$, who then obtains the permuted prediction index and performs a computational private retrieval (cPIR) protocol with $P_0$ to obtain the final prediction index.
To demonstrate the efficiency of \textit{PermLLM}, we perform extensive experiments under different settings, and compare it with MPCFormer~\cite{li2022mpcformer} and Puma~\cite{dongye2023puma}, both are state-of-the-art MPC-based secure transformer inference solutions.
Moreover, we provide an implementation of PermLLM based on the ChatGLM-6B model~\cite{2023glm}, an opensource LLM with more than 6 billion parameters.
PermLLM generates each token within seconds under realistic network settings and only consumes about 20Mb of network traffic, while the prediction result remains exactly the same as the original ChatGLM-6B model.
The experiment results indicate that PermLLM is practical for real-world applications.
In summary, we make the following contributions:
\begin{itemize}
    \item 
    We propose PermLLM, which enables fast private inference for LLMs.
    We implement PermLLM on the ChatGLM-6B model, and achieve a token generation speed of 3s/token under a realistic WAN setting, showing its potential for real-world usages.
    \item 
    We propose to compute the nonlinear functions such as GeLU and Softmax on randomly permuted plaintext elements based on the secure secret-shared permutation protocol, avoiding heavy cryptographic operations while the privacy leakage is negligible.
    \item 
    We optimize the secret-shared secure multiplication by leveraging the properties of LLM inference.
    We incorporate it with secure random permutation, homomorphic encryption, and other techniques, achieving a reduction of magnitudes in the computation and communication cost compared with the existing private LLM inference solutions.
\end{itemize}
\section{Related Work}

\textbf{Cryptographic methods.}
There have been many studies applying cryptographic primitives to deep learning in order to achieve data privacy.
For example, CryptoNets~\cite{gilad2016cryptonets} applies fully homomorphic encryption to neural networks inference, and DeepSecure~\cite{rouhani2018deepsecure} represents the neural network by garbled circuits~\cite{yao1986gc} to realize secure inference.
More recently, many hybrid methods have been proposed for more efficient secure neural network computation~\cite{aby2015,secureml2017,riazi2018chameleon,juvekar2018gazelle,wagh2019securenn,rathee2020cryptflow2,knott2021crypten,huangzhicong2022cheetah}.
These methods leverage multiple cryptographic primitives such as homomorphic encryption~\cite{paillier1999,2012bfv1,2012bfv2,ckks2017}, secret sharing~\cite{shamir1979share,beaver1992efficient}, garbled circuits~\cite{yao1986gc}, and probably some customized MPC protocols~\cite{wagh2019securenn}.
While these methods could be practical in certain applications, their applications to LLMs are difficult due to the large model size.
Currently, secure LLMs based on cryptographic methods~\cite{li2022mpcformer,houxiaoyang2023ciphergpt,dongye2023puma} take at least several minutes and GBs of communication to generate just one token, even under extremely ideal network environments.
Such cost is currently impractical for real-world LLM applications, and it is expected that achieving practicality with existing cryptographic tools will be challenging~\cite{houxiaoyang2023ciphergpt}.
The major efficiency bottleneck for cryptographic methods is the nonlinear computation.
To improve the efficiency of nonlinear computation, many methods have been proposed like garbled circuits~\cite{secureml2017,juvekar2018gazelle}, GMW protocol~\cite{knott2021crypten,riazi2018chameleon}, segmented approximation~\cite{secureml2017,mohassel2018aby3}, and polynomial approximation~\cite{haomeng2022iron_transformers,houxiaoyang2023ciphergpt,dongye2023puma}.
However, those methods are still quite slow compared with linear computations (addition/multiplication), and nonlinear computation remains the most consuming part in many privacy-preserving machine learning systems.

\noindent\textbf{Random permutation.} 
A walkaround for the cryptographically expensive nonlinear computation is to evaluate the randomly permuted plaintext~\cite{zheng2022permute,liu2024pp-stream}.
Although random permutation cannot fully protect data privacy as the set of elements is exposed, it is shown to be secure in a practical sense when the number of elements is not too small~\cite{zheng2022permute,liu2024pp-stream}.
The permutation invariance in transformers is also considered
~\cite{xu2023shuffled_transformer}.
Although claimed to be privacy-preserving, it is obvious that the model utility is preserved in this way.

\section{Preliminaries}
\subsection{Structure of LLM}
\begin{wrapfigure}{r}{0.4\textwidth}
\vspace{-15pt}
\begin{center}
    \includegraphics[width=1\linewidth]{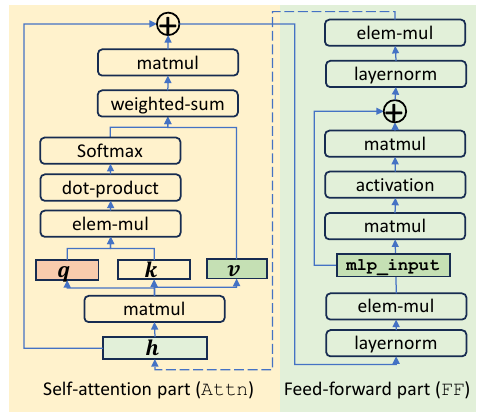}
    \vspace{-15pt}
    \caption{Transformer layer structure.}
    \label{fig:transformer_layer}
\end{center}
\vspace{-5pt}
\end{wrapfigure}

A typical LLM can be represented as follows:
\begin{equation}
    \mathtt{LLM}(X) = L_\text{head} \circ T_L \circ \cdots T_1 \circ \mathtt{Emb}(X),
\end{equation}
where \texttt{Emb} is the word embedding layer, $T_i$s are the transformer layers, and $L_\text{head}$ is the last dense layer for next token prediction.
\texttt{Emb} and $L_\text{head}$ share the same weight matrix.

While \texttt{Emb} and $L_\text{head}$ are just dense layers that can be viewed as matrix multiplications,
the transformer layers are complicated and contain the most computations and parameters of the LLM.
The transformer layer has two parts, i.e., the self-attention module and the feed-forward module.
The self-attention module consists of several dense layers and the attention layer, while the feed-forward module also consists of several dense layers, with an activation layer and two layer normalization layers.
\Cref{fig:transformer_layer} shows the architecture of a typical transformer layer.

\subsection{Additive Secret Sharing}
\label{sec:a-ss}
Additive secret sharing (A-SS)~\cite{shamir1979share} is a widely used two-party secure computation method for arithmetic circuits.
We say a value $x$ is (additively) shared between two parties ($P_0$ and $P_1$) when $P_0$ holds $\langle x \rangle_0$ and $P_1$ holds $\langle x \rangle_0$, such that $\langle x \rangle_0 + \langle x \rangle_1 = x$.
$\langle x \rangle_i$ is then called the share of $P_i$.
We use $\langle x \rangle$ to denote that value $x$ is in a shared manner.
To add two shared values, both parties simply add their shares.
The multiplication of shared values is slightly more complicated.
It is often done using the Beaver's triples~\cite{beaver1992efficient}.
Suppose $P_0$ and $P_1$ hold two shared values $x$ and $y$, and want to compute their product $xy$.
The procedure of multiplication can be described as follows.
\begin{enumerate}
    \item (Offline) $P_0$ and $P_1$ first obtain the additively shared Beaver's triples, i.e., $u, v, w$ such that $uv = w$ ($u, v$ have the same shape with $x, y$).
    \item $P_0$ and $P_1$ reconstruct $x - u$ and $y - v$  by revealing their shares $\langle x-u \rangle_i$ and $\langle y - v\rangle_i$.
    \item $P_0$ computes $\langle z \rangle_0 = (x - u)(y - v) + \langle x \rangle_0 (y - v) + (x - u) \langle v \rangle_0 + \langle w \rangle_0$;
    $P_1$ computes $\langle z \rangle_1 = \langle x \rangle_1 (y - v) + (x - u) \langle v \rangle_1 + \langle w \rangle_1$.
    Thus, $\langle z\rangle_i$ is $P_i$'s share of $xy$.
\end{enumerate}
To achieve information-theoretic security, A-SS is defined on a finite domain such as the integer ring.
In this case, a single party's view is always uniformly random and irrelevant to the plaintext values.
\section{Secure Building Blocks}
From the architecture of LLM, we can see that the computation in LLM inference mainly includes matrix multiplications in dense layers and attention layers, the argmax operation (or other sampling strategies) used to generate the next token, and nonlinear functions including the activation function, layer normalization, and Softmax.
Correspondingly, in this section, we describe the building blocks for PermLLM, including matrix multiplication protocols based on secret sharing, next token prediction based on the BFV cryptosystem, and our proposed nonlinear evaluation protocol based on secure random permutation.

\subsection{Security Setting}
PermLLM targets the case of two-party LLM inference.
We denote the model provider as $P_0$ and the model user as $P_1$.
For simplicity, we also assume there is a third party $P_2$ who will assist $P_0$ and $P_1$ in the precomputation during the offline phase.
We assume that each party is semi-honest, i.e., they will not deviate from the protocol, but they will exploit any information they receive during the execution of the protocol, which is a common setting used in previous privacy-preserving machine learning studies~\cite{wagh2019securenn,mohassel2018aby3,li2022mpcformer}.

\noindent\textbf{Execution phases.}
The execution of PermLLM is divided into three phases, i.e., \textit{preparation}, \textit{offline}, and \textit{online}.
The online phase means the actual secure computation execution when the input is delivered to the corresponding party, i.e. $P_1$ obtains the input token.
Before the execution of the online phase, an offline phase is consumed to distribute the pre-computed values.
The difference between the offline phase and the online phase is that the offline phase does not require the actual value of the inputs, so it can be executed ahead of the actual task.
The preparation phase will only be executed once unless the model parameters of the LLM are modified.

\subsection{Linear Computations}
The linear computations in LLM inference can be divided into two types:
\begin{itemize}
    \item \textbf{One operand fixed case}: During the inference, the model weights are all fixed. 
    Thus, the computation of dense layers and the embedding retrieval layer (where the word embedding is multiplied with the one-hot vector representing the input token) belong to this type.
    \item \textbf{One operand growing case}: 
    The text generation task requires multiple inference steps to generate a sequence of output tokens.
    Therefore, by leveraging the property of the attention mechanism, the past key/value vectors in each layer can be cached for future computation.
    In other words, in the linear computations of the attention mechanism, one operand keeps `growing', i.e., a new part will be appended to the existing operand during each inference.
\end{itemize}
For each type, we design a modified secure multiplication protocol to minimize the computation cost.
The insight is that the interactive computation of the `fixed part' of the operand only needs to be executed once.
Suppose the fixed or growing operand is $x$, and another operand is $y$.
Recall the secure multiplication procedure introduced in \Cref{sec:a-ss}.
If $x$ is fixed throughout the computation, the process of $P_2$ generating $u$ and $P_0, P_1$ reconstructing $x - u$ can be executed \textit{only once} in the preparation phase.
Similarly, if $x$ grows during each inference, $P_2$ only needs to compute $u$ for the \textit{newly appended part} of $x$, instead of the \textit{complete} $x$.
We formally describe the secure multiplication protocols of those two cases in \Cref{alg:secure_mul_fixed,alg:secure_mul_growing}.
Notice the multiplication here can be scalar multiplication, matrix multiplication, or any operation satisfying the distributive property.

\begin{algorithm}[h]
\small
\caption{\textsf{SecureMul}$_F$}
\label{alg:secure_mul_fixed}
\begin{algorithmic}[1]
\Input $P_0$ holds constant $X$. Shared value $\langle Y \rangle$.

\Output Shared product $\langle Z\rangle = \langle XY \rangle$.

\item[\underline{Preparation:}]
\State $P_2$ generates random $U$ (which has the same shape as $X$) and sends it to $P_0$.
\State $P_0$ sends $X - U$ to $P_1$.

\item[\underline{Offline:}]
\State $P_2$ generates random $V$ (which has the same shape as $Y$), and shares $V, W \gets UV$ to $P_0$ and $P_1$.

\item[\underline{Online:}]
\State $P_0$ and $P_1$ reconstructs $Y - V$.
\State $P_0$ computes $\langle Z \rangle_0 \gets X(Y-V) + (X-U)\langle V \rangle_0 + \langle W \rangle_0$.
\State $P_1$ computes $\langle Z \rangle_1 \gets (X-U) \langle V \rangle_1 + \langle W \rangle_1$.
\end{algorithmic}
\end{algorithm}
\vspace{-5pt}

\begin{algorithm}[h]
\small
\caption{\textsf{SecureMul}$_G$}
\label{alg:secure_mul_growing}
\begin{algorithmic}[1]
\Input Shared value $\langle X' \rangle$, $\langle X \rangle$ and $\langle Y \rangle$.
During each online execution, $X$ is updated by appending $X'$ to it, i.e., $X \gets \mathsf{concat}(X, X')$.

\Output Shared product $\langle Z \rangle =\langle XY \rangle$.

\item[\underline{Preparation:}]
\State 
    $P_0$ sets $X - U \gets null, \langle U \rangle_0 \gets null$.
    $P_1$ sets $X - U \gets null, \langle U \rangle_1 \gets null$.
\State $P_2$ sets $U \gets null$.

\item[\underline{Offline:}]
\LineComment{We denote $\mathsf{concat}(null, U) = U$}
\State $P_2$ generates random $U', V$ (which has the same shape as $X', Y$), and sets $U\gets \mathsf{concat}(U, U'), W \gets UV$.

\State $P_2$ shares $U', V, W$ to $P_0$ and $P_1$.

\item[\underline{Online:}]
\State $P_0$ and $P_1$ reconstruct $X' - U'$ and $Y - V$.
\State $P_0$ and $P_1$ set $X - U \gets \mathsf{concat}(X - U, X'-U')$.

\State 
    $P_0$ sets $\langle U \rangle_0 \gets \mathsf{concat}(\langle U \rangle_0, \langle U' \rangle_0$).
    $P_1$ sets $\langle U \rangle_1 \gets \mathsf{concat}(\langle U \rangle_1, \langle U' \rangle_1$).
\State $P_0$ computes $\langle Z \rangle_0 \gets (X-U)(Y-V) + (X-U)\langle V \rangle_0 + \langle U\rangle_0(Y-V) + \langle W \rangle_0$.
\State $P_1$ computes $\langle Z \rangle_1 \gets (X-U) \langle V \rangle_1 + \langle U\rangle_1(Y-V) + \langle W \rangle_1$.
\end{algorithmic}
\end{algorithm}

\subsection{Nonlinear Computation}
The key idea in PermLLM is to evaluate the non-linear functions using random permutation, based on the observation that shuffled plaintext elements are quite safe to reveal.
Different from the method proposed in \cite{zheng2022permute} where the third party ($P_2$) has to participate in the online phase, we design a different protocol based on a secret-shared permutation protocol~\cite{chase2020shuffle}, where $P_1$ obtains the shuffled plaintext.
We only use the online part of the protocol in~\cite{chase2020shuffle}, and distribute the offline computation to $P_2$ to reduce the cost.

Suppose $P_0$ and $P_1$ want to compute the shares of $f(\mathbf x)$, where $f$ is an element-wise function and $\mathbf x$ is a (shared) vector.
First, $P_1$ initializes a random permutation $\pi$, and $P_0$ and $P_1$ collaboratively perform the secure permutation protocol, which reconstructs $\pi[\mathbf x]$ on $P_1$.
$P_1$ then compute $f(\pi[\mathbf x])$ in plaintext.
After this, $P_0$ and $P_1$ again perform the secure permutation protocol, using the inverse permutation $\pi^{-1}$, to obtain the shared values of $f(\mathbf x) = \pi^{-1}[f(\pi[\mathbf x])]$.
We describe the secure permutation protocol and the nonlinear computation protocol in \Cref{alg:secure_perm,alg:nonlinear} respectively.

\begin{wrapfigure}{r}{0.48\textwidth}
\vspace{-15pt}
\begin{minipage}{\linewidth}
\begin{algorithm}[H]
\small
\caption{\textsf{SecurePerm}}
\label{alg:secure_perm}
\begin{algorithmic}[1]
\Input $P_0$ holds a permutation $\pi$. Shared value $\langle \mathbf x \rangle$.
\Output Shared permuted value $\langle \mathbf y \rangle  = \langle \pi[\mathbf x] \rangle$.
\item[\underline{Offline:}]
\State $P_0$ sends $\pi$ to $P_2$.
\State $P_2$ randomly generates $\mathbf r_1, \mathbf r_2$, and sets $\Delta \gets \pi[\mathbf r_0] - \mathbf r_1$.
\State $P_2$ sends $\Delta$ to $P_0$, and $\mathbf r_0, \mathbf r_1$ to $P_1$.
\item[\underline{Online:}]
\State $P_1$ sends $\langle x \rangle_1 - \mathbf r_0$ to $P_0$, and sets $\langle y \rangle_1 = \mathbf r_1$.
\State $P_0$ sets $\langle y \rangle_1 = \pi[\langle x \rangle_0] + \pi[\langle x \rangle_1 - \mathbf r_0] + \Delta$.
\end{algorithmic}
\end{algorithm}
\end{minipage}
\vspace{-5pt}
\end{wrapfigure}

\begin{algorithm}[h]
\small
\caption{\textsf{SecureNonlinear}}
\label{alg:nonlinear}
\begin{algorithmic}[1]
\Input $P_0$ holds a permutation $\pi$. 
    Shared value $\langle \mathbf x \rangle$. 
    A public nonlinear function $f$ such that $f(\pi[\mathbf x]) = \pi[f(\mathbf x)]$ for any permutation $\pi$.
\Output Shared output $\langle \mathbf y \rangle = \langle f(\mathbf x) \rangle$.
\State $P_0$ and $P_1$ execute $\langle \pi[\mathbf x] \rangle \gets \mathsf{SecurePerm}(\pi, \mathbf x)$, and reconstruct $\pi[\mathbf x]$ on $P_1$.
\State $P_1$ computes $\langle \mathbf y' \rangle_1 \gets f(\pi[\mathbf x])$. $P_0$ sets $\langle \mathbf y' \rangle \gets \mathbf 0$.
\State $P_0$ and $P_1$ execute $\langle \mathbf y \rangle \gets \mathsf{SecurePerm}(\pi^{-1}, \langle \mathbf y' \rangle)$.
\end{algorithmic}
\end{algorithm}

\noindent\textbf{2D Permutation.}
In the above \textsf{SecureNonlinear} protocol, we assume that the non-linear function is element-wise, so the input will be flattened into a 1D vector before executing  \textsf{SecureNonlinear}.
However, nonlinear functions encountered in LLM are not usually completely element-wise, such as Softmax and layer normalization.
Instead, they can only be viewed element-wise in the last dimension.
In this case, we can split an input tensor $X \in \mathbb R^{n\times d}$ into $n$ vectors $(\mathbf x_1, \cdots, \mathbf x_n)$.
In this way, we have $\pi^{-1}[f(\pi[\mathbf x_i])] = f(\mathbf x_i)$ for any $d$-permutation $\pi$.
Thus, for such nonlinear functions and input $X\in\mathbb R^{n \times d}$, we define 2D permutation as follows: 
First, permute the $d$ elements inside each row using distinct random permutations, and then permute the $n$ rows.
The inverse 2D permutation can also be easily obtained by first inverse the row-level permutation and then the element-level permutations.
In this way, we can extend the \textsf{SecureNonlinear} protocol to Softmax, layer normalization, and other similar functions.

\subsection{Next Token Retrieval}
The output of LLM is an $N$-dimensional vector representing the scores/probabilities on the vocabulary.
There are many strategies to generate the next token from the scores, e.g., the greedy strategy selects the next token with the largest scores, and sampling-based strategies select the next token by chance based on the score.
However, the score vector is a projection of the word embedding, i.e., $\mathbf s = E \mathbf h$, where $E$ is the word embedding, and $\mathbf h$ is the last hidden representation.
If $P_1$ collects multiple score vectors, he also obtains the information of the word embedding, which is considered privacy leakage.

To protect the word embedding, PermLLM does not directly reconstruct the scores on $P_1$.
Instead, $P_1$ receives a randomly permuted score vector $\mathbf s' = \pi[\mathbf s]$.
Based on the permuted scores, $P_1$ chooses the permuted index of the next token according to his own strategy.
After that, $P_1$ and $P_0$ perform PIR (Private Information Retrieval) to get the actual index.

In PermLLM, we adopt the cPIR (computational private retrieval) protocol~\cite{xunyi2013pir,melchor2016xpir}, 
and use the BFV homomorphic encryption~\cite{2012bfv1,2012bfv2}, because it is efficient and works on the integer ring.
The idea is to let $P_1$ encrypt the permuted index as a one-hot integer, then send it to $P_0$ who performs a homomorphic dot-product with the inverse permutation vector $\mathbf p = (\pi^{-1}(1), \cdots, \pi^{-1}(n))$.
Because the BFV cryptosystem naturally supports data packing (under the common 128-bit security setting we can pack 2048 values into one ciphertext), and the vocabulary size is around 130K, we reduce the number of cryptographic operations to around 65.

We describe the algorithm (take argmax as the example) formally in \Cref{alg:prediction}, where $N\approx 130$K is the vocabulary size, $L=2048$ is the number of slots in the ciphertext, \textsf{Enc} and \textsf{Dec} are the encryption and decryption function.
We use $\llbracket \cdot \rrbracket$ to denote the ciphertext, and $\oplus, \odot$ to denote the homomorphic addition and multiplication.
\begin{algorithm}[h]
\small
\caption{\textsf{SecurePrediction} (argmax)}
\label{alg:prediction}
\begin{algorithmic}[1]
\Input Shared value $\langle \mathbf s \rangle$ (score vector).
\Output $P_0$ holds $i = \mathop{\text{argmax}}_i \mathbf s[i]$.
\State $P_0$ and $P_1$ execute $\langle \pi[\mathbf s] \rangle \gets \mathsf{SecurePerm}(\pi, \langle \mathbf s \rangle)$ and reconstruct $\pi[\mathbf s]$ on $P_1$.
\LineComment{Only $P_0$ has $\pi$}
\State $P_1$ computes $\pi[i] \gets \mathop{\text{argmax}}_{i'} \pi[\mathbf s]$.
\State $P_1$ sets $\mathbf p \gets \mathsf{onehot}(\pi[i])$, then divides $\mathbf p$ into $\lceil N/L \rceil$ pieces of length $L$, i.e, $\mathbf p_1, \cdots, \mathbf p_{\lceil N/L \rceil}$.
\State $P_1$ computes $\llbracket p_j \rrbracket \gets \mathsf{Enc}(\mathbf p_j), j  = 1..{\lceil N/L \rceil}$, and sends them to $P_0$.
\State $P_0$ also divides the vector $(\pi^{-1}[1], \cdots, \pi^{-1}[N])$ into $\lceil N/L \rceil$ pieces $\mathbf q_1, \cdots, \mathbf q_{\lceil N/L \rceil}$, and homomorphically computes $\llbracket i \rrbracket \gets (\llbracket \mathbf p_1 \rrbracket \odot \mathbf q_1) \oplus \cdots \oplus (\llbracket \mathbf p_{\lceil N/L \rceil} \rrbracket \odot \mathbf q_{\lceil N/L \rceil})$.
$P_0$ then sends $\llbracket i \rrbracket$ to $P_1$.
\State $P_1$ gets $i \gets \mathsf{Dec}(\llbracket i \rrbracket)$.
\end{algorithmic}
\end{algorithm}
\vspace{-8pt} 
\section{Optimizing PermLLM}
In this section, we describe the optimizations on the implementation of PermLLM, including making some unimportant LLM parameters public, and the secret sharing on real number for GPU computation.
\subsection{Making Some Parameters Public}
We notice that the multiplication between a secret-shared value and a public value is almost cost-free.
Consider a public value $a$ and a shared value $\langle x \rangle$.
To perform multiplication, each party locally computes $a\langle x\rangle_i$, and they obtain the shares of $ax$.
Hence, if some of the LLM parameters are made public, the computation cost of secret sharing-based protocols could be reduced.
Hence, in PermLLM, we reveal the \textit{element-wise affine weight} in layer normalization modules.
The affine weights only make up to $< 0.01\%$ of total model parameters, and are usually close to $\mathbf 1$.
%
%
Hence, we assume they contain less useful information and can be revealed securely.
Also, the \textit{positional embeddings} in LLMs are generated by some common rules, hence we consider them as public parameters. 

We consider the other parts of the LLM parameters important and shall be kept private.
For example, word embeddings are useful and can be used to fine-tune the other language models.
Also, the weights in attention layers are critical to the performance and functionality of LLMs, e.g., a low-rank change on them (LoRA) significantly influences the performance on certain tasks~\cite{2022lora}.
However, the relationship between the hardness of stealing the model and the public parameters remains to be studied.
One can also choose to make more parameters public as a tradeoff between efficiency and model privacy.

\subsection{Secret Sharing on Real Numbers}
\label{sec:impl-ss}
Secret sharing is originally designed for integer rings, in which it achieves information-theoretical security since each party only receives uniformly random views.
However, modern GPUs are designed for float-point computation and are rather insufficient for integer operations.
Hence, we extend the additive sharing scheme to real numbers for efficiency and convenience~\cite{tjell2021real_sharing,gundersen2023real_sharing}.
Suppose $P_0$ and $P_1$ are performing secure multiplication between $x$ and $y$, with the beaver triples $(u, v, w)$.
Recall that $x - u$ and $y - v$ are revealed during the multiplication.
In this case, we have to assign $u$ a much larger scale than $x$, so that $x - u$ exposes little information about $x$ (similar for $v$).
This is done by choosing $\mathbb E[u^2] = K^2\mathbb E[x^2]$, where $K$ is a coefficient to control the scale ratio.
Larger $K$ leads to a higher security level due to the larger noise/signal ratio.

Suppose $\mathbb E[x^2], \mathbb E[y^2]$ is known, the generation of beaver triples can be expressed as follows:
Randomly select $\langle u \rangle_0, \langle u\rangle_1 \sim U, \langle v\rangle_0, \langle v\rangle_1 \sim V, \langle w\rangle_0 \sim W$ independently such that $\mathbb E[U^2] = K^2 \mathbb E[x^2]$, $\mathbb E[V^2] = K^2\mathbb E[y^2]$, and $\mathbb E[W^2] = K^2\mathbb E[x^2y^2]$, $\langle w\rangle_1 = (\langle u\rangle_0 + \langle u\rangle_1)(\langle v\rangle_0 + \langle v\rangle_1) - \langle w\rangle_0$.
Similarly, when sharing a plaintext value $x$, one also choose $\langle x \rangle_0 \sim R$ such that $\mathbb E[R^2] = K^2\mathbb E[x^2]$, and $\langle x \rangle_1 = x - \langle x \rangle_0$.

In this way, we can ensure that any share/revealed value will be either a random value or a large random `noise' plus the original value.
We will prove that such property will be preserved after our secure multiplication protocols in the next section.
Hence, although information-theoretic security cannot be achieved, we can still obtain a high security level by controlling the scale of noise.

It is also worth noting that real number secret sharing is not necessary for PermLLM, it is just an optimization regarding the modern GPU's computation power for floating-point operations.
One can always turn it into integer secret sharing.

\section{Security Analysis}
Although PermLLM cannot achieve provable security in an information-theoretic sense, we show that PermLLM actually leaks very little information from both parties.
In this section, we discuss the privacy leakage from two aspects, i.e., the random permutation method and the floating-point secret sharing.

\subsection{Random Permutation}
In PermLLM, the random permutation is applied to the attention scores, hidden embeddings, and the score vector for the prediction.
For attention scores, the random permutation is performed on a matrix of shape $h \times n$, representing the attention scores of $h$ attention heads and $n$ key vectors ($n$ is the length of input).
The permutation applied here is a 2D permutation, where we permute the $h$ rows first, and in each row, we permute the $n$ elements.
This results in $h!(n!)^h$ possible permutations.
In LLM models, $h \ge 32$, so we have $h!(n!)^h > 32! > 2^{117}$.
As for the other cases, the number of elements is even larger (e.g., $\ge 4096$), so we have an even larger number of permutations ($4096! > 2^{43250}$).
We can conclude that the probability of guessing the correct permutation in such cases is negligible.
Previous studies also show that permuted vector are almost irrelevant of the original vector in the statistical sense~\cite{zheng2022permute,liu2024pp-stream}.

\textbf{Brute-force attack.}
Since the random permutation reserves the set of elements, it is theoretically possible for the adversary with exponential computation power to enumerate all possible inputs to match the set of permuted elements (under the assumption that the sets of hidden embedding elements of different inputs are distinguishable).
Although this kind of attack could be a valid concern under extreme conditions, PermLLM naturally avoids such attacks as the permuted elements are revealed on $P_1$ (the user) instead of $P_0$ (the model provider).

\subsection{Real Number Secret Sharing}
Unlike on the finite rings (e.g., $\mathbb Z_N$), secret sharing cannot achieve information-theoretic security on real numbers.
However, as we described in \Cref{sec:impl-ss}, we control the security level by choosing the noise scale.
In this way, we can ensure that $\mathbb E[(\langle x \rangle_i -x)^2] \ge K^2\mathbb E[x^2]$ after sharing, and $\mathbb E[(x - u)^2] \ge K^2\mathbb E[x^2]$ during the secure multiplication.
In other words, the revealed value is the original value plus a noise $K$ times larger in magnitude.
Here, we prove that such property preserves during our secure multiplication protocols.

\begin{theorem}[Noise scale after multiplication]
    If $\mathbb E[(\langle x\rangle_i - x)^2] \ge K^2 \mathbb E[x^2]$, and $\mathbb E[(\langle y \rangle_i - y)^2] \ge K^2 \mathbb E[y^2]$ for $i = 0, 1$, and $\langle z\rangle = \mathsf{SecureMul}_F(x, y)$ or $\mathsf{SecureMul}_G(x, y)$, 
    where the beaver triples is generated in the way described in \Cref{sec:impl-ss}.
    Then we have $\mathbb E[(\langle z\rangle_i - z)^2] \ge K^2 \mathbb E[z^2]$.
\end{theorem}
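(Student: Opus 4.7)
The plan is to reduce the statement to a lower bound on the second moment of individual shares. Since $\langle z\rangle_0 + \langle z\rangle_1 = z$, we have $\langle z\rangle_i - z = -\langle z\rangle_{1-i}$, so the target $\mathbb{E}[(\langle z\rangle_i - z)^2] \ge K^2 \mathbb{E}[z^2]$ is equivalent to $\mathbb{E}[\langle z\rangle_{1-i}^2] \ge K^2 \mathbb{E}[z^2]$. Thus it suffices to show that each of $\langle z\rangle_0$ and $\langle z\rangle_1$ has second moment at least $K^2 \mathbb{E}[(xy)^2]$.

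The key observation is that $\langle w\rangle_0$ is drawn by $P_2$ independently of every other random variable entering either share, and by construction $\mathbb{E}[\langle w\rangle_0^2] = K^2 \mathbb{E}[x^2 y^2] = K^2 \mathbb{E}[z^2]$. For $\mathsf{SecureMul}_F$, I would write $\langle z\rangle_0 = B_0 + \langle w\rangle_0$ with $B_0 := X(Y-V) + (X-U)\langle v\rangle_0$, and $\langle z\rangle_1 = B_1 - \langle w\rangle_0$ with $B_1 := (X-U)\langle v\rangle_1 + UV$, using the identity $\langle w\rangle_1 = UV - \langle w\rangle_0$. Each $B_i$ is independent of $\langle w\rangle_0$, and assuming $\langle w\rangle_0$ is zero-mean (the standard choice for the noise distribution in \Cref{sec:impl-ss}), orthogonality gives $\mathbb{E}[\langle z\rangle_i^2] = \mathbb{E}[B_i^2] + \mathbb{E}[\langle w\rangle_0^2] \ge K^2 \mathbb{E}[z^2]$.

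For $\mathsf{SecureMul}_G$, the analysis is structurally identical: each share carries the additional terms $\langle U\rangle_i(Y-V)$, and $\langle z\rangle_0$ also carries $(X-U)(Y-V)$, but these terms involve only $X, Y, U, V$ and the shares of $U, V$, all of which remain independent of the freshly sampled $\langle w\rangle_0$. The same decomposition $\langle z\rangle_i = B_i' \pm \langle w\rangle_0$ with $B_i' \perp \langle w\rangle_0$ then yields the same lower bound.

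The step I expect to be the most delicate is justifying the zero-mean and independence conditions cleanly. The Pythagorean split $\mathbb{E}[(B_i \pm \langle w\rangle_0)^2] = \mathbb{E}[B_i^2] + \mathbb{E}[\langle w\rangle_0^2]$ relies on $\mathbb{E}[\langle w\rangle_0] = 0$, which is natural but not stated explicitly in \Cref{sec:impl-ss}; one either makes the symmetry assumption part of the noise specification or absorbs the cross term $2\mathbb{E}[B_i]\mathbb{E}[\langle w\rangle_0]$ by slightly strengthening the scale constant. It is worth noting that the input hypotheses on $x$ and $y$ are not actually consumed in this inequality itself; they matter for composition with upstream protocols and for the analogous noise bound on the values revealed during the protocol such as $y - v$ and $x' - u'$.
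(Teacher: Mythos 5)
Your proposal is correct and takes essentially the same route as the paper: the paper also isolates the freshly sampled, independent term $\langle w\rangle_0$ in the explicit expression for $\langle z\rangle_i - z$ and lower-bounds the second moment by $\mathbb E[\langle w\rangle_0^2] = K^2\mathbb E[(xy)^2]$, for both $\mathsf{SecureMul}_F$ and $\mathsf{SecureMul}_G$. Your observations that this implicitly requires $\langle w\rangle_0$ to be zero-mean and that the hypotheses on $x$ and $y$ are never actually consumed are both accurate and apply equally to the paper's own argument.
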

\begin{proof}
    $\mathsf{SecureMul}_F$:
    First, $(\langle z \rangle_0 - z)^2 = (-x \langle v \rangle_1 - u\langle v \rangle_0 + \langle w \rangle_0)^2$, and $\langle w \rangle_0$ is independent with the other terms, hence
    $\mathbb E[(\langle z \rangle_0 - z)^2] \ge \mathbb E[\langle w_0 \rangle^2] = K^2\mathbb E[(xy)^2]$.
    Second, $\mathbb E[(\langle z\rangle_1 - z)^2] = \mathbb E[(-xy + x\langle v\rangle_1 + u\langle v\rangle_0 - \langle w \rangle_0)^2] \ge \mathbb E[\langle w\rangle_0^2] = K^2\mathbb E[(xy)^2]$.
    Similarly, in $\mathsf{SecureMul}_G$ we can also extract the independent term $\langle w\rangle_0$ and obtain the same result.
\end{proof}

Hence, in real number secret sharing, we can always control the security level by the selection of $K$.
In practice, we choose $K = 100$, meaning that the noise is 100 times larger than the original value.
This sharing scheme introduces some rounding errors during the computation, but the error is negligible using the common float32 data format.
Also note that in PermLLM, we do not re-share any value, hence the attacker cannot collect multiple shares to infer the original value.
\section{Experiments}
To evaluate the performance of PermLLM, we conduct multiple experiments under different settings, i.e., model structure and network environment.
We compare PermLLM with the state-of-the-art MPC solution for LLM inference, i.e., MPCFormer~\cite{li2022mpcformer} and Puma~\cite{dongye2023puma}.
The experiments are performed on a server with 4 NVIDIA RTX 3090 (24G) GPUs, except for the realization of ChatGLM-6B, which is performed on a server with 3 NVIDIA L20 (48G) GPUs.
We use PyTorch~\cite{paszke2019pytorch} for floating-point computations and Pyfhel~\cite{ibarrondo2021pyfhel} for the BFV cryptosystem.

\textbf{Network settings.}
We perform experiments under LAN (localhost) and two WAN (wide area network) settings.
For two WAN settings, 
we set the RTT (Round Trip Time) to 10ms and 20ms,
corresponding to the network connection between two relatively close cities, e.g., Tokyo-Osaka (400km), and the network connection between two cities of moderate distance, e.g., Berlin-London (930km)
\footnote{Data is obtained from wondernetwork.com/ping}.
The bandwidth is set to 1Gbps and 100Mbps, corresponding to the commercial high bandwidth and the normal bandwidth for family usage.
%

\subsection{Benchmarks}
To demonstrate the effectiveness of PermLLM, we perform benchmarks on nonlinear operations and transformer layer inference.
We measure the communication size, number of communication rounds, and running time under different network settings.

The results for nonlinear operations are reported in \Cref{tab:nonlinear}.
PermLLM achieves the best efficiency in all benchmarks.
This is because PermLLM avoids heavy cryptographic operations by exposing the randomly permuted plaintexts, so that the cost is almost irrelevant to the nonlinear function itself (as local computation cost is negligible).
The cost is mainly made of one secure permutation plus one homomorphical dot-product (for argmax), or two secure permutations (for element-wise functions).
However, for MPCFormer or Puma, the nonlinear functions are approximated by high-order polynomials or computed by boolean circuits, which is expensive for secure computation.

The results for single transformer layer inference are reported in \Cref{tab:layer}.
We can see that PermLLM exhibits least communication size and running time under all settings.
For the small transformer ($d=768$), PermLLM is at least one magnitude faster than others, and for the large transformer ($d=4096$, used in LLMs), the advantage becomes around two magnitudes.

We also notice that in MPCFormer, the secret shared multiplication based on the Crypten~\cite{knott2021crypten} is unoptimized for the fixed weight matrix, resulting in a huge additional overhead.

\begin{table}[]
\renewcommand\arraystretch{0}
\small
\caption{Benchmark results on nonlinear operations. The right 3 columns are time consumption (seconds) under different network settings. `LN' is laryer normalization.}
\label{tab:nonlinear}
\begin{tabular}{@{}cccrrrrr@{}}
\toprule
$f$ & size & Method & Comm.(Mb) & Rounds & LAN & 1Gbps/10ms & 100Mbps/20ms \\ \midrule
\multirow{6}{*}{\parbox[t]{2mm}{\multirow{3}{*}{\rotatebox[origin=c]{90}{Argmax}}}} & \multirow{3}{*}{1K} & MPCFormer & 1.24 & 101 & 0.09 (0.01) & 2.11 (0.00) & 4.12 (0.01) \\ \cmidrule(l){3-8} 
 &  & Puma & $\approx$0.80 & $\approx$1700 & 0.15 (0.03) & 3.29 (0.03) & 6.33 (0.05) \\ \cmidrule(l){3-8} 
 &  & \textbf{PermLLM} & \textbf{0.26} & \textbf{4} & \textbf{0.03 (0.00)} & \textbf{0.05 (0.00)} & \textbf{0.06 (0.01)} \\ \cmidrule(l){2-8} 
 & \multirow{3}{*}{100K} & MPCFormer & 110.9 & 153 & 0.78 (0.04) & 4.04 (1.41) & 10.74 (0.50) \\ \cmidrule(l){3-8} 
 &  & Puma & $\approx$80 & $\approx$2500 & 1.17 (1.13) & 6.23 (0.08) & 12.88 (0.06) \\ \cmidrule(l){3-8} 
 &  & \textbf{PermLLM} & \textbf{4.02} & \textbf{4} & \textbf{0.12 (0.01)} & \textbf{0.17 (0.01)} & \textbf{0.2 (0.00)} \\ \midrule
\multirow{6}{*}{\parbox[t]{2mm}{\multirow{3}{*}{\rotatebox[origin=c]{90}{Element-wise}}}} & \multirow{3}{*}{1K} & Puma: Gelu & $\approx$1.7 & $\approx$200 & 0.04 (0.02) & 0.38 (0.02) & 0.71 (0.01) \\ \cmidrule(l){3-8} 
 &  & Puma: LN & $\approx$0.20 & $\approx$190 & 0.02 (0.00) & 0.35 (0.01) & 0.67 (0.01) \\ \cmidrule(l){3-8} 
 &  & \textbf{PermLLM: any} & \textbf{0.01} & \textbf{3} & \textbf{0.00 (0.00)} & \textbf{0.02 (0.00)} & \textbf{0.04 (0.01)} \\ \cmidrule(l){2-8} 
 & \multirow{3}{*}{\begin{tabular}[c]{@{}c@{}}100K\\100$\times$1K\end{tabular}} & Puma: Gelu & $\approx$170 & $\approx$200 & 1.32 (0.11) & 1.69 (0.23) & 12.37 (0.08) \\ \cmidrule(l){3-8} 
 &  & Puma: LN & $\approx$20 & $\approx$200 & 0.13 (0.03) & 0.15 (0.02) & 0.83 (0.03) \\ \cmidrule(l){3-8} 
 &  & \textbf{PermLLM: any} & \textbf{1.15} & \textbf{3} & \textbf{0.01 (0.00)} & \textbf{0.03 (0.01)} & \textbf{0.05 (0.00)} \\ \bottomrule
\end{tabular}
\end{table}

\begin{table}[h]
\vspace{-5px}
\renewcommand\arraystretch{0}
\small
\centering
\caption{Benchmark results on single transformer layer inference.}
\label{tab:layer}
\begin{tabular}{@{}ccrrrrr@{}}
\toprule
Model size & Method           & Comm.(Mb)    & Rounds        & LAN                   & 1Gbps/10ms            & 100Mbps/20ms         \\ \midrule
\multirow{3}{*}{\begin{tabular}[c]{@{}c@{}}Large\\ ($d=4096$)\end{tabular}} & MPCFormer & 3073.79 & 53 & 1.37 ± 0.01 & 26.15 ± 0.35 & 259.65 ± 0.36 \\ \cmidrule(l){2-7} 
           & Puma        & $\approx$33   & $\approx$1500 & 8.82 (0.09)           & 11.45 ± 0.42         & 14.40 ± 0.65         \\ \cmidrule(l){2-7} 
           & \textbf{PermLLM} & \textbf{0.49} & \textbf{20}   & \textbf{0.04 ± 0.01}  & \textbf{0.12 ± 0.01} & \textbf{0.24 ± 0.00} \\ \midrule
\multirow{3}{*}{\begin{tabular}[c]{@{}c@{}}Small\\ ($d=768$)\end{tabular}}  & MPCFormer & 108.32  & 53 & 1.29 ± 0.19 & 1.71 ± 0.01  & 10.29 ± 0.20  \\ \cmidrule(l){2-7} 
           & Puma         & $\approx$6    & $\approx$1000 & 0.46 (0.03)           & 2.24 ± 0.03          & 3.97 ± 0.04          \\ \cmidrule(l){2-7} 
           & \textbf{PermLLM} & \textbf{0.1}  & \textbf{20}   & \textbf{0.034 ± 0.00} & \textbf{0.15 ± 0.01} & \textbf{0.23 ± 0.02} \\ \bottomrule
\end{tabular}
\end{table}

\subsection{ChatGLM-6B benchmark}
\begin{figure}[h!]
    \centering
    \includegraphics[width=1\linewidth]{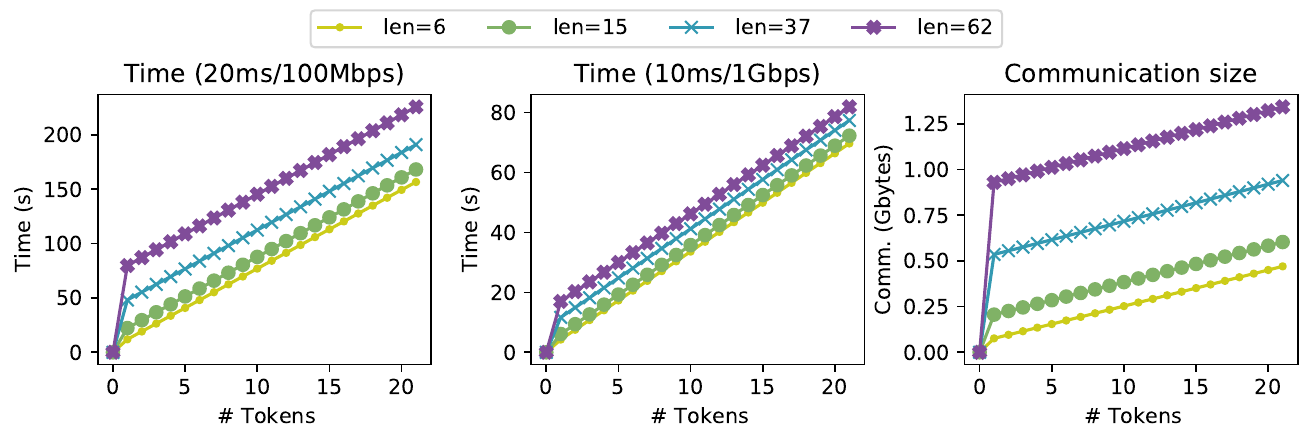}
    \vspace{-5px}
    \caption{Time consumption and communication size of the ChatGLM-6B private inference.}
    \label{fig:enter-label}
\end{figure}

We implement the ChatGLM-6B model using PermLLM, and test it under the two simulated WAN settings.
The prompt length is chosen to be 6, 15, 37, and 62, and the generation length is set to 20.
We plot the time consumption of the generation process for each generated token.
Under the network of 20ms/100Mbps, the token generation speed is around 7s/token except for the first token because the whole prompt is forwarded.
For a faster network (10ms/1Gbps), the speed becomes around 3s/token, which can be considered practical for real-world usage.
We observe that the generation speed remains constant regardless of the number of previously generated tokens, because we have optimized secure multiplication protocols and do not perform repeated interactive computation for past key/value vectors.

As for the accuracy loss, since PermLLM only introduces negligible rounding error compared to the plaintext model, the output is identical to the original ChatGLM-6B model during our experiments.
\section{Conclusion}
In this paper, we introduce PermLLM, a fast private inference framework for large language models.
By leveraging the random permutation based nonlinear evaluation along with optimizing the secret sharing based protocols, PermLLM greatly improves the efficiency compared to other methods, at a negligible privacy cost.
PermLLM achieves a token generation speed of 3s/token for the ChatGLM-6B model, making private inference of LLMs possible for practical usage.
The limitation of this paper is mainly the non-standard security assumption used for random permutation.
The proposed method has a potential positive societal impact, as it aims to address the data privacy problem in LLMs which is good for the society.
\bibliographystyle{plain}
\bibliography{refs}
\end{document}